\newsavebox{\@brx}
\newcommand{\llangle}[1][]{\savebox{\@brx}{\(\m@th{#1\langle}\)}%
  \mathopen{\copy\@brx\kern-0.5\wd\@brx\usebox{\@brx}}}
\newcommand{\rrangle}[1][]{\savebox{\@brx}{\(\m@th{#1\rangle}\)}%
  \mathclose{\copy\@brx\kern-0.5\wd\@brx\usebox{\@brx}}}
\declaretheorem[style=definition,name=Definition,qed=$\blacksquare$]{definition}
\declaretheorem[style=definition,name=Example,qed=$\blacksquare$]{example}
\declaretheorem[style=definition,name=Problem,qed=$\blacksquare$]{problem}
\declaretheorem[style=definition,name=Remark,qed=$\blacksquare$]{remark}
\declaretheorem[style=plain,name=Proposition]{proposition}
\declaretheorem[style=plain,name=Theorem]{theorem}
\DeclareMathOperator*{\argmin}{arg\, min} 
\DeclareMathOperator*{\argmax}{arg\, max} 
\newcommand{\R}{\mathbb{R}}
\renewcommand{\S}{\mathcal{S}}
\newcommand{\A}{\mathcal{A}}
\title{\LARGE {\bf A Numerical Method to Compute Stability Margins of Switching Linear Systems}}
\author{Corbin Klett, Matthew Abate, Samuel Coogan, and Eric Feron
\thanks{This work was supported by the KAUST baseline budget. 
}
\thanks{C. Klett is with the School of Aerospace Engineering, Georgia Institute of Technology, Atlanta, 30332, USA: 
{\tt\small Corbin@GaTech.edu}.}
\thanks{M. Abate is with the School of Mechanical Engineering and the School of Electrical and Computer Engineering, Georgia Institute of Technology, Atlanta, 30332, USA: {\tt\small Matt.Abate@GaTech.edu}.}
\thanks{S. Coogan is with the School of Electrical and Computer Engineering and the School of Civil and Environmental Engineering, Georgia Institute of Technology, Atlanta, 30332, USA: 
{\tt\small Sam.Coogan@GaTech.edu}.}
\thanks{E. Feron is with the Department of Electrical Engineering, King Abdullah University of Science and Technology, Thuwal, Saudi Arabia: {\tt\small Eric.Feron@Kaust.edu.sa}.}
}
\begin{document}
\def\reals{\mathbb{R}}
\def\Co{{\bf Co}}
\def\diag{{\bf diag}}

\maketitle
\thispagestyle{empty}
\pagestyle{empty}

\begin{abstract}
Stability margins for linear time-varying (LTV) and switched-linear systems are  traditionally computed via quadratic Lyapunov functions, and these functions certify the stability of the system under study. In this work, we show how the more general class of homogeneous polynomial Lyapunov functions is used to compute stability margins with reduced conservatism, and we show how these Lyapunov functions aid in the search for periodic trajectories for marginally stable LTV systems.  Our work is premised on the recent observation that the search for a homogeneous polynomial Lyapunov function for some LTV systems is easily encoded as the search for a quadratic Lyapunov function for a related LTV system, and our main contribution is an intuitive algorithm for generating upper and lower bounds on the system's stability margin. We show also how the worst-case switching scheme---which draws an LTV system \emph{closest} to a periodic orbit---is generated. Three numerical examples are provided to aid the reader and demonstrate the theoretical contributions of the work.
\end{abstract}

\section{Introduction}


Switched-linear  and linear time-varying (LTV) systems are two important and widely studied classes of systems \cite{liberzon2003switching}. They are useful abstractions for studying systems with uncertain parameters, and they can be used to represent mode-switched systems when the switching scheme may be unknown. In all cases, computing stability margins is an important exercise, with practical benefits for safety critical systems \cite{afman2018triple}. In the following,  we study both switched-linear and LTV systems in an equivalent manner and propose methods to compute  upper and lower bounds on a system's stability margin.

We study the stability of LTV systems whose dynamics are influenced by an uncertain system parameter.  We characterize the system's robustness with respect to this parameter by computing an upper and a lower bound such that the system is always stable when the parameter is less than the lower bound and such that a diverging trajectory exists when the parameter is greater than the upper bound. As will be shown, these bounds hint at the existence of a specific parameter value that induces periodic orbits for LTV systems but does not induce diverging trajectories. For this reason, we approach the problem by studying methods for finding periodic trajectories in marginally  LTV systems.

Stability margins are generally impossible to compute exactly for all but linear time-invariant systems,  though much work has been done on estimating the margin through the computation of upper and lower bounds \cite{young1997lower}\cite{feronBoussios}. For time-varying systems, while quadratic stability has been studied in depth since the late 1940s by Lur'e, Yakubovitch, and others~\cite{BEFB:94}, exact approaches to the computation of stability margins are more recent and include~\cite{Rap:90} and~\cite{Rap:93}. However, most computationally tractable analyses are generally conservative and typically give results which may significantly under- or over-approximate the true stability margin. 

In order to gain an understanding of the conservatism of these types of bounding methods, we compute a less-conservative bound by searching for a trajectory which becomes unstable when the system is allowed to operate near its stability margin. Such a trajectory is referred to as a \textit{counter-example} and is a powerful tool for numerically approximating the true stability margin. Generating a counter-example can be challenging. In many cases, simulations will show that a time-varying system operating near or even beyond its stability margin is always stable. Work by Parrilo and Jadbabaie investigates computation of worst-case switching sequences for discrete time systems \cite{parrilo_jsr}. Our work takes inspiration from this approach when computing counter-examples for continuous LTV systems.

In the case where the system matrix remains Hurwitz for all time, the examples studied indicate that instability occurs through a switching sequence which leads the system through a Hopf bifurcation as the uncertain parameter is adjusted. A theoretical result which confirms this observation is not achieved, but the observation nonetheless provides the intuition behind the algorithm for computing a counter-example and therefore an upper bound on the stability margin. In classical linear systems analysis, a system which depends on a single parameter moves from stable to unstable when, as the parameter is adjusted, the poles on the root locus cross through the origin or through the imaginary axis. In the latter case, instability is achieved via a Hopf bifurcation. For the LTV system currently under study, the system matrix can remain Hurwitz so that all poles stay in the left-half plane for all time, but the time-dependent variation of the uncertain parameter can induce a limit cycle. 

Our approach to bounding stability margins is aided by the  recently presented hierarchy of polynomial Lyapunov functions for switched-linear systems \cite{abate2019lyapunov}. In that work, the search for homogeneous polynomial Lyapunov functions for LTV systems is recast as a search for quadratic Lyapunov functions for a related hierarchy of time-varying Lyapunov differential equations; thus, this is an elegant and equivalent alternative to sum-of-squares programming techniques for computing Lyapunov functions \cite{parrilo2000structured}. Intuitive procedures following from these observations have been shown to provide higher-fidelity performance analyses \cite{abate2020performance} \cite{miller2020peak}. The Lyapunov functions returned by these analyses also allow us to improve stability margin analysis by providing valuable information about how to compute counter-examples. A numerical example in this paper shows how a Lyapunov function for an LTV system can be used to produce a trajectory which approaches the bound computed in \cite{abate2020performance}.

This paper is structured as follows. In Section II, we review mathematical preliminaries and the Kronecker product-based formulation of the LTV system hierarchy. Section III defines the system under study and defines the problems addressed in this work. In Section IV, the hierarchy of quadratic Lyapunov functions is used to compute a lower bound on the system stability margin. Sections V and VI use these Lyapunov functions to develop the algorithm for producing a counter-example and an upper bound on the stability margin. Numerical examples are presented in section VII.

\section{Notation and Preliminaries}
We denote by $S_{++}^n \subset \R^{n\times n}$ the set of symmetric positive definite $n\times n$ matrices. We denote by $I_n$ the $n\times n$ identity matrix, and we denote by $0_n \in \R^n$ the zero vector in $\R^n$.  Given $M \in \R^{n\times m}$ and integer $i \geq 1$, we denote by $\otimes^i M \in \R^{n^i \times m^i}$ the $i^{\rm th}$-Kronecker power of $M$, as defined recursively by 
\begin{equation}
\begin{split}
    \otimes^1 M &:= M \\
    \otimes^i M &:= M \otimes (\otimes^{i-1} M) \qquad i \geq 2.
\end{split}    
\end{equation}

We study, in the following, switched linear systems, as in
\begin{equation}\label{LTV_sys}
    \begin{split}
    \dot{x} & = A(t)x,
    \end{split}
\end{equation}
with state $x\in \R^n$ and where $A(t) \in \S \subseteq \R^{n\times n}$ evolves for all time inside a finite set of switched modes
\begin{equation}\label{convexhull}
    \S = \{A_1,\, \cdots,\, A_k\}.
\end{equation}
We are chiefly interested in studying the stability properties of \eqref{LTV_sys}.  

\begin{definition}[Stability]
The system \eqref{LTV_sys} is \emph{stable} if for all initial conditions $x(0) \in \R^n$, we have that $\lim_{t \rightarrow \infty} x(t)= 0_n$ for all $A(t)$ satisfying \eqref{convexhull}.
\end{definition}

The system \eqref{LTV_sys} is \emph{unstable} when \eqref{LTV_sys} is not stable.

\begin{definition}[Quadratic Stability]\label{def1}
For $P\in S^n_{++}$, the function $V(x) = x^TPx$ is a quadratic Lyapunov function for \eqref{LTV_sys} if 
\begin{equation}
    A^TP + PA \prec 0 \quad\text{for all} \quad A\in \S.
\end{equation}
\end{definition}

When \eqref{LTV_sys} is \emph{linear time-invariant}---equivalently, when $k = 1$---the system \eqref{LTV_sys} is stable if and only if there exists a quadratic Lyapunov function for \eqref{LTV_sys}. This is, however, not true in the general setting of \eqref{LTV_sys}; indeed, there exists stable switched systems for which no quadratic Lyapunov function exists \cite{liberzon2003switching}.  Nonetheless, the existence of such a quadratic Lyapunov function for \eqref{LTV_sys} guarantees the stability of the system.

It was recently shown in \cite{abate2019lyapunov} how a related hierarchy of switched linear systems is used to assess the stability of \eqref{LTV_sys}. Consider the following infinite hierarchy of switched linear systems:
\begin{equation}\label{hierarchy}
\begin{array}{l}
H_1: \left\{
\begin{array}{rcl} 
\dot{\xi}_1 &=& \A_1(t)\xi_1\\
\A_1(t)&\in&\S_1\\
S_1 &=& \{\A_1^1,\, \cdots,\, \A_k^1\} \\
\A_j^1&=& A_j 
\end{array}\right.\\[10pt] \\
H_i:  \left\{
\begin{array}{rcl} 
\dot{\xi}_i &=& \A_i(t)\xi_i\\
\A_i(t) &\in& \S_i \\
\S_i &=&  \{\A_1^i,\, \cdots,\, \A_k^i\}\\
\A^i_j &=& I_n \otimes \A_j^{i-1} + A_j\otimes I_{n^{i-1}}
\end{array}\right.
\end{array}
\end{equation}
where $\xi_i \in \R^{n^i}$ is the state of the $i^{\rm th}$-level system $H_i$ and $i \geq 1$.  Each system $H_i$ in the hierarchy is switched linear, as in \eqref{LTV_sys}.

The hierarchy \eqref{hierarchy} is best understood by looking at the first and second level systems $H_1$ and $H_2$.  The system $H_1$ is equivalent to the switched linear system \eqref{LTV_sys}, and the system $H_2$ is nothing but the vectorized version of the Lyapunov differential equation
\begin{equation}\label{LDE}
    \dot{X}(t) = A(t)X(t)+X(t)A(t)^T
\end{equation}
where $A(t)$ maintains its definition from \eqref{LTV_sys} and where the state of \eqref{LDE} is a matrix $X\in \R^{n\times n}$. Moreover, it is shown in \cite{abate2019lyapunov} that if $x(t)$ is a solution to \eqref{LTV_sys}, then $\xi_i(t) = (\otimes^i x(t))$ is a solution to $H_i$. It follows, therefore, that if the $i^{\rm th}$-level $H_i$ is stable, then the system \eqref{LTV_sys} is stable as well.  

\begin{proposition}[ \hspace{1sp}\label{prop1text}  \cite{abate2019lyapunov} \hspace{1sp}]
If the $i^{\rm th}$-level system $H_i$ is quadratically stable, i.e. if there exists a $P \in S^{n^i}_{++}$ such that 
\begin{equation}\label{prop1}
    \A^TP + P\A \prec 0 \quad\text{for all} \quad \A\in \S_i,
\end{equation}
then the system \eqref{LTV_sys} is stable and 
\begin{equation}\label{lyap}
    V(x) = (\otimes^i x)^TP(\otimes^i x)
\end{equation}
is a homogeneous polynomial Lyapunov function for \eqref{LTV_sys} of order $2i$.
\end{proposition}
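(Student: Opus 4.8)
The plan is to verify that the candidate function $V(x) = (\otimes^i x)^T P (\otimes^i x)$ is a genuine (global, strict) Lyapunov function for \eqref{LTV_sys}: I would check that it is positive definite and radially unbounded, that it has the claimed homogeneous-polynomial structure, and that it decreases strictly along every admissible trajectory. The stability conclusion then drops out of the usual Lyapunov estimate.

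The structural properties are immediate. Each entry of $\otimes^i x$ is a monomial of degree exactly $i$ in the components of $x$, so $V$ is a quadratic form evaluated on these monomials and is therefore a homogeneous polynomial of degree $2i$, which is the ``order $2i$'' claim. Furthermore $\|\otimes^i x\|_2 = \|x\|_2^{\,i}$, so with $P \in S^{n^i}_{++}$ we obtain the sandwich
\[
\lambda_{\min}(P)\,\|x\|_2^{2i} \;\le\; V(x) \;\le\; \lambda_{\max}(P)\,\|x\|_2^{2i},
\]
which simultaneously gives $V(x) > 0$ for $x \neq 0_n$, $V(0_n) = 0$, and $V(x) \to \infty$ as $\|x\|_2 \to \infty$.

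Next I would differentiate $V$ along a solution $x(t)$ of \eqref{LTV_sys} whose active mode on a given interval is $A(t) = A_j$. Writing $\xi_i(t) := \otimes^i x(t)$ and using the fact recalled above from \cite{abate2019lyapunov}---that the Veronese lift of a trajectory of \eqref{LTV_sys} is a trajectory of $H_i$ under the \emph{same} switching signal, i.e. $\dot\xi_i = \A_j^i \xi_i$ whenever $\dot x = A_j x$---we get
\[
\dot V \;=\; \dot\xi_i^{\,T} P \xi_i + \xi_i^{\,T} P \dot\xi_i \;=\; \xi_i^{\,T}\!\left( (\A_j^i)^{T} P + P \A_j^i \right)\xi_i .
\]
Hypothesis \eqref{prop1} makes the bracketed matrix negative definite for each $j \in \{1,\dots,k\}$, and since $\|\xi_i\|_2 = \|x\|_2^{\,i}$ we have $\xi_i \neq 0$ precisely when $x \neq 0_n$. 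Setting $-2\alpha := \max_{j}\, \lambda_{\max}\!\bigl((\A_j^i)^{T} P + P \A_j^i\bigr) < 0$ then yields $\dot V \le -2\alpha\,\|x\|_2^{2i} \le -\bigl(2\alpha/\lambda_{\max}(P)\bigr)\,V$, uniformly over the active mode. Hence $V(x(t)) \le V(x(0))\,e^{-2\alpha t/\lambda_{\max}(P)}$, and the lower sandwich bound forces $x(t) \to 0_n$ for every initial condition $x(0)$ and every $A(t)$ satisfying \eqref{convexhull}; that is, \eqref{LTV_sys} is stable.

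The computation is short precisely because the hierarchy \eqref{hierarchy} was built to make it so; the only ingredient that is not bookkeeping is the mode-matching assertion, namely that $\otimes^i x$ is driven by $\A_j^i$ exactly when $x$ is driven by $A_j$, and this is the statement imported from \cite{abate2019lyapunov}. For completeness it is proved by induction on $i$: the case $i = 1$ is trivial, and the inductive step follows from $\otimes^i x = x \otimes (\otimes^{i-1} x)$, the product rule, and the mixed-product property $(A\otimes B)(C\otimes D) = (AC)\otimes(BD)$ together with the recursion $\A_j^i = I_n \otimes \A_j^{i-1} + A_j \otimes I_{n^{i-1}}$. I expect this induction---rather than the Lyapunov estimate---to be the only place that calls for a moment of care, and even there the manipulations are routine.
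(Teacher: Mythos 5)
Your proof is correct and follows exactly the route the paper itself indicates: the paper states this proposition by citation to \cite{abate2019lyapunov} and justifies it only by the observation that $\xi_i(t) = \otimes^i x(t)$ solves $H_i$ whenever $x(t)$ solves \eqref{LTV_sys}, which is precisely the mode-matching fact your argument hinges on. Your write-up simply fills in the standard details (the identity $\|\otimes^i x\|_2 = \|x\|_2^{\,i}$, the sandwich bounds, the Gr\"onwall-type decay estimate, and the Kronecker induction), all of which check out.
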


Note that the system \eqref{LTV_sys} can be stable even when $H_i$ is not quadratically stable for a given $i\geq 1$.  Nonetheless, the stability of \eqref{LTV_sys} implies that there must exist an $i\geq 1$ such that $H_i$ is quadratically stable \cite{abate2019lyapunov}.

\section{Problem Statement}
In the following, we consider linear time-varying (LTV) systems
\begin{equation}\label{uncertain}
\begin{split}
    \dot{x} &= A(t)x \\
    A(t) &= A + \Delta(t) A_0
\end{split}
\end{equation}
for $A,\, A_0 \in \R^{n\times n}$ and time-varying scalar $\Delta(t) \geq 0$.  We refer to $A$ as the nominal dynamics, and the term $\Delta(t)A_0$ describes a time-varying perturbation. We assume that $A$ is Hurwitz so that \eqref{uncertain} is stable when $\Delta(t) \equiv 0$.

\begin{remark}\label{remark1}
The switched linear system \eqref{LTV_sys} is a generalization of \eqref{uncertain}; in particular, if $\S$ in \eqref{convexhull} is taken as
\begin{equation}\label{convexhull2}
    \S = \{A,\, A + \delta A_0\},
\end{equation}
then the switched system \eqref{LTV_sys} is stable if and only if the linear time varying system \eqref{uncertain} is stable for all $\Delta(t)$ that satisfy $\Delta(t) \in [0,\, \delta]$ for all $t \geq 0$.
\end{remark}

\begin{definition} \label{stablewrt}
Given $A_0,\, A \in \R^{n\times n}$ and $\delta \in \R$, the system \eqref{uncertain} is stable with respect to $\delta$ if \eqref{uncertain} is stable for all $\Delta(t)$ such that $\Delta(t) \in [0,\, \delta]$ for all $t\geq 0$. The \textit{stability margin} for \eqref{uncertain} is the unique $\hat{\delta}\geq 0$ such that \eqref{uncertain} is stable with respect to all $\delta \in [0,\, \hat{\delta})$ and such that \eqref{uncertain} is not stable with respect to $\hat{\delta}$.
\end{definition}

The system \eqref{uncertain} is guaranteed to be stable with respect to all $\delta \in [0,\, \hat{\delta})$, and for this reason, we begin by studying methods for under-approximating the stability margin for \eqref{uncertain}.

\begin{problem}\label{prob1}
Under-approximate the stability margin $\hat{\delta}$ for \eqref{uncertain} by finding the largest possible $\delta$ for which a Lyapunov function can be computed which certifies that \eqref{uncertain} is stable with respsect to $\delta$. We denote this lower bound on $\hat{\delta}$ as $\underline{\delta}$.
\end{problem}

Numerous methods exist for computing $\underline{\delta}$ \cite{BEFB:94}, but these methods generally provide conservative estimates of the stability margin $\hat{\delta}$. In Section \ref{sec1}, we present an intuitive procedure for solving Problem \ref{prob1} whereby the stability margin is under-approximated using a homogeneous polynomial Lyapunov function for \eqref{uncertain}.  These Lyapunov functions are computed using the hierarchy of switched systems \eqref{hierarchy}, and we show how conservatism is mitigated in our approach when high-order Lyapunov functions are considered.

Next, consider the case that $A+\delta A_0$ remains Hurwitz and assume that, for $\delta = \hat{\delta} + \varepsilon$, there exists an $\varepsilon > 0$ and a switching function $\Delta(t)$ which produces a diverging trajectory. Based on the examples studied, we observe that, when $\varepsilon = 0$, there exists a $\Delta(t)$ that can induce a periodic trajectory but not a diverging trajectory. Furthermore, for any $\varepsilon > 0$ there exists a $\Delta(t)$ which can produce a diverging trajectory. This observation has not been proven; however, it inspires the development of a useful algorithm for bounding the stability margin from above and finding a periodic trajectory.

\begin{definition}
A switching signal $\Delta(t) \in [0,\, \delta]$ which can cause a trajectory of \eqref{uncertain} to become periodic when $\delta=\hat{\delta}$ is called a \textit{worst-case switching function} and is denoted as $\Delta_w(t;\delta)$. The system matrix $A(t)$ produced by the worst-case switching function is denoted by $A_w(t;\delta) = A + \Delta_w(t;\delta) A_0$.
\end{definition}

\begin{remark}
It is not proven that $\Delta_w(t;\delta)$ always exists, but in practice it has been approximated for every example LTV system studied. It is also not necessarily unique, but that is not important since any $\Delta_w(t;\delta)$ will produce a periodic trajectory when $\delta=\hat{\delta}$, and it is the approximation of this $\hat{\delta}$ which is of primary interest.
\end{remark}

Even for $\delta < \hat{\delta}$, a worst-case switching function $\Delta_w(t;\delta)$ is very useful for system analysis as it can be used to slow a trajectory's path to the origin. See Example \ref{ex3} in Section VII.

While $\Delta_w(t;\delta)$ can not be computed exactly, it can be closely approximated using a Lyapunov function for \eqref{uncertain}, and a higher-order Lyapunov function will produce a better approximation. We observe that the transition from stability to instability occurs through a switching sequence which results in a periodic orbit about the surface of a level set of a Lyapunov function.

\begin{problem}
Given $A,\, A_0$, $\delta$ and a Lyapunov function parameterized by $P$ from the solution to Problem \ref{prob1}, approximate $\Delta_w(t;\delta)$ and $A_w(t;\delta)$. These approximations are denoted as $\Delta_w(t;\delta,P)$ and $A_w(t;\delta,P)$. Often, the order of the Lyapunov function used to generate the switching function is included as a subscript, as in $\Delta_w(t;\delta,P_{2i}).$
\end{problem}

Once an approximate worst-case switching function is found, the smallest value of $\delta$ for which $\Delta_w(t;\delta,P)$ produces a periodic trajectory can be computed.

\begin{problem}\label{prob3}
Find the smallest value of $\delta$ such that the system \eqref{uncertain} with $A_w(t;\delta,P)$ produces a periodic trajectory. In this case, $\delta$ approximates the lowest upper bound on the stability margin $\hat{\delta}$. This upper bound is denoted as $\overline{\delta}$.
\end{problem}

\section{Under-Approximating the Stability Margin}\label{sec1}

We begin by addressing Problem \ref{prob1}. The system \eqref{uncertain} is stable if and only if there exists a homogeneous sum-of-squares polynomial Lyapunov function certifying its stability \cite{ahmadi2017sum} \cite{mason2006common}.  Thus, we present an iterative approach for under-approximating $\hat{\delta}$, and this approach relies on the hierarchy of switched systems \eqref{hierarchy}.

\begin{proposition}\label{thing5}
Assume $i \geq 1$ and $\delta \geq 0$, and define $\S = \{A,\,A + \delta A_0\}$. If there exists a  $P \in S_{++}^{n^i}$ satisfying the constraint \eqref{prop1}, then $\delta < \hat{\delta}$ and \eqref{uncertain} is stable with respect to $\delta$.  Moreover, $\underline{\delta} = \delta$ solves Problem \ref{prob1}.
\end{proposition}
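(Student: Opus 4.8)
The plan is to assemble the claim from the results already in place, following the chain: a feasible $P$ for \eqref{prop1} $\Rightarrow$ stability of the associated switched system $\Rightarrow$ stability of \eqref{uncertain} with respect to $\delta$ $\Rightarrow$ $\delta<\hat\delta$, and then to read off the last sentence.

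First I would observe that the hypothesis is \emph{verbatim} the hypothesis of Proposition \ref{prop1text} applied to the switched system \eqref{LTV_sys} built from $\S=\{A,\,A+\delta A_0\}$: saying that some $P\in S_{++}^{n^i}$ satisfies \eqref{prop1} is exactly saying that the $i^{\rm th}$-level system $H_i$ generated by this $\S$ is quadratically stable. Proposition \ref{prop1text} then gives two things at once: the switched system \eqref{LTV_sys} with this $\S$ is stable, and $V(x)=(\otimes^i x)^TP(\otimes^i x)$ is a homogeneous polynomial Lyapunov function of order $2i$ for it.

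Next I would transfer this back to the parameter-dependent system \eqref{uncertain} using Remark \ref{remark1}: for $\S=\{A,\,A+\delta A_0\}$, stability of \eqref{LTV_sys} is equivalent to stability of \eqref{uncertain} for every $\Delta(t)$ with $\Delta(t)\in[0,\,\delta]$ for all $t\ge 0$, which by Definition \ref{stablewrt} is precisely the statement that \eqref{uncertain} is stable with respect to $\delta$; the $V$ produced above is the certifying Lyapunov function that Problem \ref{prob1} asks for, so declaring $\underline\delta=\delta$ yields a valid under-approximation of $\hat\delta$ and solves Problem \ref{prob1}. To upgrade ``stable with respect to $\delta$'' to the strict inequality $\delta<\hat\delta$, I would use the monotonicity of the predicate ``stable with respect to $\cdot$'': if $\delta'\le\delta$, then every $\Delta(t)$ valued in $[0,\delta']$ is also valued in $[0,\delta]$, so stability with respect to $\delta$ forces stability with respect to $\delta'$. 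If we had $\delta\ge\hat\delta$, monotonicity would make \eqref{uncertain} stable with respect to $\hat\delta$, contradicting the defining property of $\hat\delta$ in Definition \ref{stablewrt}; hence $\delta<\hat\delta$.

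I do not expect a genuine obstacle here: the proposition is essentially a repackaging of Proposition \ref{prop1text}, Remark \ref{remark1}, and Definition \ref{stablewrt}. The only point needing a word of care is strict versus weak inequality, dispatched by the monotonicity remark above; it is worth checking that the same argument also covers the boundary case $\delta=0$ (where $\S$ collapses to $\{A\}$ and $A$ Hurwitz already forces $\hat\delta>0$), so that the statement holds for all $\delta\ge 0$ as claimed.
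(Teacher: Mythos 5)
Your proposal is correct and follows exactly the route the paper takes: the paper dispatches this proposition in one line as ``a direct result of Proposition \ref{prop1text} and the discussion provided in Remark \ref{remark1},'' which is precisely your chain. Your added care about the strict inequality $\delta<\hat\delta$ via monotonicity of ``stable with respect to $\cdot$'' is a worthwhile elaboration the paper leaves implicit, but it does not constitute a different approach.
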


Proposition \ref{thing5} is a direct result of Proposition \ref{prop1text} and the discussion provided in Remark \ref{remark1}.
Further, Proposition \ref{thing5} shows how the uncertain system \eqref{uncertain} is certified stable with respect to a candidate parameter $\delta$; one can certify that $\delta$ solves Problem \ref{prob1} by computing a quadratic Lyapunov function for the $i^{\rm th}$-level system $H_i$.  Certifying a candidate parameter $\delta$ in this way may lead to conservative estimates of the stability margin $\hat{\delta}$. However, we show next how the homogeneous polynomial Lyapunov function, which is derived in the certification of $\delta$, can be used to compute alternate parameters and, in this way, reduce conservatism.

\begin{proposition}\label{corbin2}
    For a given $i \geq 1$ and $\delta \geq 0$, assume that there exists a $P \in S_{++}^{n^i}$ satisfying the constraint \eqref{prop1} for $\S = \{A,\,A + \delta A_0\}$.  Define by 
    \begin{equation}\label{deltaP}
        \delta_{P} := \argmax\limits_{\delta^{*} \geq \delta} \delta^* 
    \end{equation}
    \begin{equation}\label{const5}
        \text{s.t.} \qquad (\A+\delta^{*}\A_0)^TP + P(\A+\delta^{*}\A_0) \preceq 0
    \end{equation}
where
\begin{equation}
\begin{split}
    \mathcal{A} &:= \sum_{ j = 0}^{i-1} I_{n^j} \otimes A \otimes I_{n^{i-1-j}}, \\
    \mathcal{A}_0 &:= \sum_{ j = 0}^{i-1} I_{n^j} \otimes A_0 \otimes I_{n^{i-1-j}}. \\
\end{split}
\end{equation}
then $\delta_{P}$ solves Problem \ref{prob1} and $\delta_{P} \geq \delta$.
\end{proposition}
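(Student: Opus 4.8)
The plan is to show that $\delta_P$ is a valid value for $\underline{\delta}$ by establishing that the linear matrix inequality solved in \eqref{const5}, although it is a \emph{non-strict} inequality in the higher-dimensional Kronecker space, still certifies the (strict) stability of \eqref{uncertain} with respect to $\delta_P$, hence $\delta_P < \hat{\delta}$. The key observation to exploit is the identity connecting the matrices $\A^i_j$ appearing in the hierarchy \eqref{hierarchy} with the ``sum-of-Kronecker-terms'' operators $\A$ and $\A_0$ defined in the statement. First I would verify, by unrolling the recursion $\A^i_j = I_n \otimes \A_j^{i-1} + A_j \otimes I_{n^{i-1}}$, that for $\S = \{A,\, A + \delta^{*} A_0\}$ the two generators of $\S_i$ are exactly $\A$ (corresponding to the mode $A$) and $\A + \delta^{*}\A_0$ (corresponding to the mode $A+\delta^{*}A_0$), where $\A = \sum_{j=0}^{i-1} I_{n^j}\otimes A \otimes I_{n^{i-1-j}}$ and likewise for $\A_0$. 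This is the telescoping that makes \eqref{const5} the natural $i$-th level constraint; I expect it to be the one genuinely technical step, since one must be careful that the Kronecker-power map $x \mapsto \otimes^i x$ intertwines the flow of $A+\delta^* A_0$ with the flow generated by $\A + \delta^*\A_0$, and this is precisely the content behind Proposition \ref{prop1text}.

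Second, I would argue feasibility and attainment: the point $\delta^{*} = \delta$ is feasible for \eqref{const5} because by hypothesis $P$ satisfies the strict inequality \eqref{prop1} at the $i$-th level for $\S = \{A,\, A+\delta A_0\}$, and in particular $(\A + \delta\A_0)^T P + P(\A+\delta\A_0) \prec 0 \preceq 0$; hence the feasible set is nonempty and $\delta_P \geq \delta$. Since $P$ is fixed and the constraint is linear (in fact affine) in $\delta^*$, the feasible set of $\delta^*$ is a closed interval $[\delta,\, \delta_P]$, so the supremum is attained and $\delta_P$ is well defined (this also handles the degenerate case $\delta_P = \infty$, which cannot occur here because $A+\delta^* A_0$ must remain Hurwitz for stability and eventually fails a necessary condition — though strictly I only need $\delta_P$ finite for values at which the system is still a candidate for stability).

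Third, and this is the crux of why $\delta_P$ still gives a \emph{strict} bound $\delta_P < \hat{\delta}$ rather than merely $\delta_P \leq \hat{\delta}$: at $\delta^{*} = \delta_P$ the matrix $P$ satisfies the non-strict Lyapunov inequality for both modes of $\S_i$, so $V(x) = (\otimes^i x)^T P (\otimes^i x)$ is non-increasing along every trajectory of \eqref{uncertain} with $\Delta(t) \in [0,\, \delta_P]$. By Remark \ref{remark1} this makes \eqref{uncertain} at worst marginally stable with respect to $\delta_P$; to upgrade this to genuine asymptotic stability I would invoke the fact (stated after Proposition \ref{prop1text} and in \cite{ahmadi2017sum,mason2006common}) that marginal stability at $\delta_P$ together with the strict feasibility available at the strictly smaller value $\delta$ — equivalently, a strict homogeneous Lyapunov certificate exists on a neighbourhood below $\delta_P$ — forces $\hat\delta > \delta_P$; alternatively, one notes that the stability margin $\hat\delta$ is the threshold at which stability is \emph{lost}, and a non-increasing homogeneous Lyapunov function whose level sets are compact rules out diverging trajectories at $\delta_P$, so $\delta_P$ cannot exceed $\hat\delta$, and a short argument (using that the set of certifiable $\delta$ is relatively open, or appealing directly to Proposition \ref{thing5} applied at any $\delta' \in (\delta, \delta_P)$ reachable by a perturbation of $P$) gives the strict inequality $\delta_P < \hat\delta$. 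Having established $\delta_P < \hat\delta$ and $\delta_P \geq \delta$, the claim that $\delta_P$ solves Problem \ref{prob1} follows immediately, since $\delta_P$ is a (weakly larger, hence at least as good) value of $\underline\delta$ certified by the very same matrix $P$ used to certify $\delta$.

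The step I expect to be the main obstacle is the last one: carefully ruling out that $\delta_P = \hat\delta$ exactly. The non-strict LMI only guarantees a non-increasing Lyapunov function, which a priori is consistent with a marginally stable system possessing a periodic orbit — exactly the scenario the rest of the paper is built around. The resolution must use that strict feasibility holds at every $\delta^* < \delta_P$ along the segment, so by continuity / a perturbation of $P$ one obtains a strict certificate for some $\delta' > \delta_P - \eta$, but not directly at $\delta_P$ itself; reconciling this with the definition of $\hat\delta$ in Definition \ref{stablewrt} (stability with respect to all $\delta \in [0,\hat\delta)$ but not at $\hat\delta$) is where the argument has to be made precise, and I would lean on Proposition \ref{thing5} together with the convexity of the feasible $\delta$-interval to close the gap.
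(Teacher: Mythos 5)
Your proposal follows essentially the same route as the paper's proof: feasibility of \eqref{deltaP} because $\delta^{*}=\delta$ satisfies \eqref{const5} (hence $\delta_{P}\geq\delta$), followed by the observation that \eqref{const5} is exactly the $i^{\rm th}$-level quadratic-stability constraint with $P$ fixed and $\delta^{*}$ free, so that Proposition \ref{thing5} certifies stability. Your unrolling of the recursion to identify the generators of $\S_i$ with $\A$ and $\A+\delta^{*}\A_0$ is correct and is implicitly what the paper relies on. The one place you go well beyond the paper is the strict-versus-non-strict issue at the maximizer: the paper's proof simply asserts that \eqref{const5} ``is equivalent to'' the constraint \eqref{prop1} and stops, even though \eqref{const5} is $\preceq 0$ while \eqref{prop1} is $\prec 0$, so at $\delta^{*}=\delta_{P}$ the function $V$ is only guaranteed non-increasing and Proposition \ref{thing5} does not literally apply there. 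Your worry is therefore well-founded, and it exposes a gap in the paper's own argument rather than in yours: convexity of the LMI in $\delta^{*}$ gives a strict certificate at every $\delta^{*}\in[\delta,\delta_{P})$, hence stability with respect to every $\delta'<\delta_{P}$ and $\delta_{P}\leq\hat{\delta}$, but the stronger conclusion that $P$ certifies asymptotic stability at $\delta_{P}$ itself (equivalently $\delta_{P}<\hat{\delta}$) does not follow without an additional argument that neither you nor the paper supplies; your candidate patches (perturbing $P$, openness of the certifiable set) are sketched rather than proved, which is an honest reflection of what the statement actually requires.
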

\begin{proof}
Choose integer $i \geq 1$ and choose $\delta \geq 0$ satisfying the constraint \eqref{prop1} at the $i^{\rm th}$ level for some $P \in S_{++}^{n^i}$. Then \eqref{uncertain} is stable with respect to $\delta$ as a result of Proposition \ref{thing5}.  From this, we know that $A$ is Hurwitz; this is due to the fact that when \eqref{uncertain} is stable with respect to $\delta$, \eqref{uncertain} is stable with respect to $0$.  Moreover, the optimization problem \eqref{deltaP} is always feasible as $\delta$ satisfies the constraint \eqref{const5}.

The constraint \eqref{const5} is equivalent to the constraint \eqref{prob1}, where here $P \in S_{++}^{n^i}$ is fixed, and $\delta^*$ is a free variable.  Therefore, if there exists a $\delta^* \geq \delta$ satisfying \eqref{const5} for $P$ then the system \eqref{uncertain} is safe with respect to $\delta^*$, \emph{i.e.} $\delta^*$ solves Problem \ref{prob1},  and $\delta^* \geq \delta$. This completes the proof.
\end{proof}

We next apply the procedures detailed in Propositions \ref{thing5} and \ref{corbin2} to form an iterative algorithm for under-approximating $\hat{\delta}$ with reduced conservatism.  This procedure is given in Algorithm \ref{alg:one}.

\begin{theorem}
If $A$ is Hurwitz, then the output of Algorithm \ref{alg:one} solves Problem \ref{prob1}, for all $\varepsilon \geq 0$ and all integers $i_{\rm max} \geq 1.$
\end{theorem}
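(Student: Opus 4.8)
The plan is to prove that a certain \emph{loop invariant} is maintained throughout the execution of Algorithm~\ref{alg:one}: after every iteration the current candidate value $\delta$ solves Problem~\ref{prob1}, because there exists an explicit $P \in S_{++}^{n^i}$ (for the level $i$ then under consideration) satisfying the constraint~\eqref{prop1} for $\S = \{A,\, A + \delta A_0\}$. By Proposition~\ref{thing5} this invariant immediately certifies that the running $\delta$ is a valid under-approximation of $\hat{\delta}$, so once I show the invariant holds at termination the theorem follows.

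For the base case I would invoke the hypothesis that $A$ is Hurwitz: at initialization $\delta = 0$ and $\S = \{A\}$, so a quadratic Lyapunov function $P_1 \succ 0$ with $A^T P_1 + P_1 A \prec 0$ exists, and $\delta = 0$ trivially solves Problem~\ref{prob1}. For the inductive step I would observe that Algorithm~\ref{alg:one} updates $\delta$ only by one of two operations. The first is the feasibility/SDP step of Proposition~\ref{thing5} at a fixed level $i$: one searches for a $P$ feasible at a (possibly larger) $\delta$; if such a $P$ is found the invariant is re-established at the new $\delta$, and if the SDP is infeasible the algorithm keeps the previous $\delta$ together with its certificate, so the invariant persists. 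The second is the expansion of Proposition~\ref{corbin2}: given the current $P$, one replaces $\delta$ by $\delta_P \geq \delta$, and Proposition~\ref{corbin2} guarantees that $\delta_P$ still solves Problem~\ref{prob1} with the \emph{same} $P$. Here I would also spell out the elementary fact that, since a feasible $P$ satisfies both $\A^T P + P\A \prec 0$ and $(\A + \delta\A_0)^T P + P(\A + \delta\A_0) \prec 0$, convexity of the affine map $\delta^\ast \mapsto (\A + \delta^\ast\A_0)^T P + P(\A + \delta^\ast\A_0)$ keeps the constraint~\eqref{const5} satisfied for every $\delta^\ast \in [0,\,\delta_P]$; hence any $\varepsilon$-tolerance used in an inner line search only ever commits to values of $\delta$ that have actually been verified feasible. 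In either case the invariant is preserved and the running value of $\delta$ is non-decreasing.

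Finally I would address termination. The outer loop ranges over the finitely many levels $i = 1,\dots,i_{\rm max}$, and within a level the search over $\delta$ either solves the exact SDP~\eqref{deltaP} of Proposition~\ref{corbin2} or performs a bisection that halts after finitely many steps once the interval width falls below $\varepsilon$ (and for $\varepsilon = 0$ the exact $\delta_P$ of Proposition~\ref{corbin2}, a single SDP, is used). Thus the algorithm halts, and by the invariant its output $\underline{\delta}$ comes with a certificate $P$ satisfying~\eqref{prop1}; by Proposition~\ref{thing5}, $\underline{\delta}$ solves Problem~\ref{prob1}, for every $\varepsilon \geq 0$ and every integer $i_{\rm max} \geq 1$.

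The main obstacle I anticipate is the bookkeeping at level transitions: I must ensure the invariant survives passing from level $i$ to level $i+1$. The clean way is to carry as the certificate the pair (current $\delta$, the assertion of Proposition~\ref{thing5} that \emph{some} feasible $P$ exists at the appropriate level) rather than tracking one specific matrix across levels, so that an infeasible higher-level SDP simply leaves the best-so-far $\delta$ untouched. The secondary subtlety is pinning down the exact role of $\varepsilon$, i.e.\ arguing that every $\delta$ the algorithm \emph{commits to} (as opposed to merely tests) is one for which feasibility has genuinely been established — which is precisely what the convexity observation above delivers.
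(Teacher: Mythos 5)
Your proposal is correct and takes essentially the same route as the paper: the paper states this theorem without a separate proof, treating it as an immediate consequence of Propositions~\ref{thing5} and~\ref{corbin2} applied at each update of $\delta$, which is precisely the loop invariant you spell out (every value of $\delta$ the algorithm commits to carries a feasibility certificate from one of those two propositions, and the initialization $\delta = 0$ is certified because $A$ is Hurwitz). The one caveat is your termination paragraph, which describes a bisection that Algorithm~\ref{alg:one} does not actually perform and which is unnecessary for the stated claim, since the theorem only asserts that whatever the algorithm outputs solves Problem~\ref{prob1}.
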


\begin{algorithm}[t]
\caption{Solve Problem \ref{prob1}}
\begin{algorithmic}[1]
\setlength\tabcolsep{0pt}
\Statex\begin{tabulary}{\linewidth}{@{}LLp{6cm}@{}}
\textbf{input}&:\:\:& $A$, $A_0$ from \eqref{uncertain}.  Increment $\varepsilon \geq 0$. Maximum level $i_{\rm max} \geq 1.$\\
\textbf{output}&:\:\:& $\underline{\delta}$ solving Problem \ref{prob1}. Lyapunov certificate $V(x)$.\\
&&
\end{tabulary}
\Function{UnderApprox}{$A,\, A_0$}
\State \textbf{initialize: } $\delta \leftarrow 0$
\While{True}
\State $\delta' \gets \delta + \varepsilon$
\For{$i\in \{1,\, \cdots,\, i_{\rm max}\}$}
\State $\S \gets \{A,\, A+\delta' A_0\}$
\State Compute $P_i \in S^{n^{i}}_{++}$ satisfying \eqref{prop1} for $\delta'$
\If{$P_i$ found}
\State $\delta \gets \delta'$
\State Break
\ElsIf{$i == i_{\rm max}$}
\State \textbf{return} $\delta$
\State \textbf{return} $V(x) = (\otimes^i x)^TP_i(\otimes^i x)$
\EndIf
\EndFor
\State Compute $\delta_{P_i}$ satisfying \eqref{deltaP} for $P_i$
\If{$\delta_{P_i}$ found}
\State $\delta \gets \delta_{P_i}$
\EndIf
\EndWhile
\EndFunction
\State\textbf{end function}
\end{algorithmic}
\label{alg:one}
\end{algorithm}

\section{Computing the Worst-Case Switching Function}

The Lyapunov function returned by Algorithm \ref{alg:one} proves asymptotic stability of \eqref{uncertain} with $\Delta(t) \in [0,\,\underline{\delta}]$. We should be able to identify $\hat{\delta}$ with a procedure such as Algorithm \ref{alg:one} since a homogeneous polynomial Lyapunov function is a necessary condition for the stabiity of \eqref{uncertain}. However, computing a Lyapunov function of sufficiently high order via semidefinite programming techniques may become computationally intractable for higher dimensional systems or as $\delta \rightarrow \hat{\delta}$. Therefore, we lay the groundwork for a numerical procedure to produce an upper bound for $\hat{\delta}$ in this section. 

The Lyapunov function returned by Algorithm \ref{alg:one} includes valuable information about how to shape $\Delta(t)$ in order to produce a periodic or diverging trajectory. Specifically, choosing a function $\Delta(t) \in [0,\; \delta]$ which maximizes the time-derivative of the Lyapunov function returned by Algorithm \ref{alg:one} will push the system as close as possible to instability for the given $\delta$, especially as the order of the Lyapunov function is increased. This intuition is formalized in Proposition \ref{deltamin}, and we now use it to approximate the worst-case switching function $\Delta_w(t;\delta)$. 

\begin{proposition}\label{deltamin}
Let \eqref{lyap} be a Lyapunov function, parameterized by $P$, for the system \eqref{uncertain}. Then the function $\Delta_w(t;\delta,P)$ which maximizes $\dot{V}(x(t))$ is given by

\begin{align} \label{deltax}
    \Delta_w(t;\delta,P) =
    \begin{cases}
     0,  \; &\text{if } \mathcal{I}(t) < 0 \\
     \delta, \; & \text{otherwise}
    \end{cases}
\end{align}

where the indicator function $\mathcal{I}(t)$ is given by

\begin{align}\label{indicator}
\mathcal{I}(t) = (\otimes^i x(t))^T(\A_{0}^TP + P\A_0)(\otimes^i x(t))
\end{align}
for $\A_0=\A_0^i$.
\end{proposition}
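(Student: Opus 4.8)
The plan is to lift the perturbed dynamics to the $i^{\rm th}$ level of the hierarchy \eqref{hierarchy}, differentiate $V$ there, and then observe that the resulting expression for $\dot V$ is \emph{affine} in the instantaneous value of the scalar $\Delta(t)$, so that the maximizing choice is bang-bang. Concretely, first I would identify the level-$i$ dynamics satisfied by $\xi_i(t) := \otimes^i x(t)$. A short product-rule computation on the $i$-fold Kronecker power (whose $i=2$ instance is precisely \eqref{LDE}) shows that $\xi_i$ solves a switched system $\dot\xi_i = \mathcal{A}_i(t)\xi_i$; applying bilinearity of the Kronecker product term by term to the recursion $\mathcal{A}^i_j = I_n\otimes\mathcal{A}^{i-1}_j + A_j\otimes I_{n^{i-1}}$, exactly as in the derivation of the closed forms in Proposition \ref{corbin2}, the matrix associated with $A(t) = A + \Delta(t)A_0$ is
\[
\mathcal{A}_i(t) = \sum_{j=0}^{i-1} I_{n^j}\otimes\big(A+\Delta(t)A_0\big)\otimes I_{n^{i-1-j}} = \mathcal{A} + \Delta(t)\,\mathcal{A}_0,
\]
where $\mathcal{A}_0 = \mathcal{A}_0^i$ is the matrix appearing in \eqref{indicator}. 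The two features I want here are that $\mathcal{A}$ does not depend on $\Delta$ and that $\mathcal{A}_i(t)$ is affine in $\Delta(t)$.

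Next I would differentiate $V$ along a trajectory. Writing $V(x(t)) = \xi_i(t)^T P\,\xi_i(t)$ and using the identity above,
\[
\dot V(x(t)) = \xi_i(t)^T\big(\mathcal{A}^T P + P\mathcal{A}\big)\xi_i(t) \;+\; \Delta(t)\,\xi_i(t)^T\big(\mathcal{A}_0^T P + P\mathcal{A}_0\big)\xi_i(t),
\]
and the coefficient multiplying $\Delta(t)$ is precisely $\mathcal{I}(t)$ of \eqref{indicator}. The first summand is a function of the state $x(t)$ alone and is unaffected by the instantaneous value of $\Delta(t)$.

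Finally, reading ``the switching function that maximizes $\dot V(x(t))$'' in the pointwise/feedback sense — at each time $t$, and given the current state, choose the admissible $\Delta(t)\in[0,\delta]$ that maximizes the displayed expression — the task reduces to maximizing the affine map $\Delta\mapsto\mathcal{I}(t)\,\Delta$ over the interval $[0,\delta]$. Its maximizer is $\Delta(t)=0$ when $\mathcal{I}(t)<0$ and $\Delta(t)=\delta$ when $\mathcal{I}(t)\ge 0$ (the value on $\mathcal{I}(t)=0$ being immaterial), which is exactly \eqref{deltax}.

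The step I expect to require the most care is not a computation but the \emph{formulation}: because $\dot V(x(t))$ depends on the entire past of $\Delta$ through $x(t)$, a literal ``trajectory maximizing $\dot V$ simultaneously for all $t$'' need not be well posed, so the statement should be read as the greedy, instant-by-instant (worst-case feedback) maximization. Once that reading is adopted, the lifting identity of Step~1 and the one-dimensional optimization above close the proof immediately.
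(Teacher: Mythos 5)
Your proposal is correct and follows essentially the same route as the paper's proof: the paper likewise writes $\dot V(x(t)) = (\otimes^i x(t))^T((\A+\Delta\A_0)^TP + P(\A+\Delta\A_0))(\otimes^i x(t))$ and observes that the pointwise maximization of this affine-in-$\Delta$ expression over $[0,\delta]$ yields the bang-bang rule \eqref{deltax}. You simply supply the details the paper leaves implicit (the derivation of the lifted dynamics $\mathcal{A}+\Delta(t)\mathcal{A}_0$ and the greedy/pointwise reading of the maximization), which are correct.
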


\begin{proof}
The optimization problem
\begin{equation}\label{optproblem}
    \begin{array}{rcl}
    \Delta_w(t;\delta,P) = \argmax\limits_{\Delta} & \dot{V}(x(t)) \vspace{.1in}\\
     \mbox{such that} & \Delta \geq 0 \\
     & \Delta \leq \delta
    \end{array}
\end{equation}
with 
\begin{multline*}\small
\dot{V}(x(t)) = (\otimes^i x(t))^T((\A+\Delta \A_0)^TP \\
+ P(\A+\Delta \A_0))(\otimes^i x(t))
\end{multline*}
for $\A = \A^i$ and $\A_0=\A_0^i$ is solved by \eqref{deltax}.
\end{proof}

The optimization problem \eqref{optproblem} shows that the $\Delta(t)$ selected is the value in the interval $[0,\delta]$ which maximizes $\dot{V}(x(t))$ for all $t \geq 0$. The state-dependence of $\mathcal{I}(t)$ suggests that $\Delta_w(t;\delta,P)$ is not unique and generally differs for trajectories beginning at different initial conditions $x_0$. Therefore, Algorithm \ref{alg:two} uses a simulation given an initial condition $x_0$ and a parameter $\delta$ and numerically solves \eqref{indicator} in order to produce \eqref{deltax}. Moreover, the system \eqref{uncertain} need not be stable with respect to $\delta$ in order to produce \eqref{deltax}.

\begin{remark}
Since $\Delta_w(t;\delta,P)$ is a piecewise-constant function, it is most easily utilized in a numerical procedure when expressed as a pair of finite sets $T = \{t_0,\dots, t_f\}$ and $\Sigma = \{\sigma_0, \dots, \sigma_{f-1}\}$ such that

\begin{equation}\label{deltapwc}
    \Delta_w(t;\delta,P) =
    \begin{cases}
     \sigma_0,  \; &\text{for } t_0 \leq t < t_1 \\ & \dots \\
     \sigma_{f-1}, \; & \text{for } t_{f-1} \leq t < t_f
    \end{cases}
\end{equation}
\end{remark}

\begin{algorithm}[t]
\caption{Solve Problem 2}
\begin{algorithmic}[1]
\setlength\tabcolsep{0pt}
\Statex\begin{tabulary}{\linewidth}{@{}LLp{6cm}@{}}
\textbf{input}&:\:\:&  $A,\,A_0$ from \eqref{uncertain}, desired margin $\delta$.\\
&& $P$ from Algorithm \ref{alg:one}. \\
&& Simulation parameters: initial condition $x_0$ and time horizon $t_f$.\\
\textbf{output}&:\:\:& $T,\,\Sigma$ from \eqref{deltapwc} to describe $\Delta_w(t;\delta,P)$.
\end{tabulary}
\Statex
\Function{FindSwitchingSequence}{inputs}
\State \textbf{initialize: }
\par
\hskip\algorithmicindent$\A^i,\,\A_0^i \leftarrow$ from \eqref{hierarchy}
\par
\hskip\algorithmicindent$t_0\leftarrow 0$
\par
\hskip\algorithmicindent$\sigma_0 \leftarrow$ from \eqref{deltax}
\par
\hskip\algorithmicindent$T \leftarrow \{t_0\}$
 \par
\hskip\algorithmicindent$\Sigma \leftarrow \{\sigma_0\}$ 
\par
\hskip\algorithmicindent$k \leftarrow 1$
\While{$t_{k-1} < t_f$}
\State Numerically solve:
\State\hskip\algorithmicindent $t_{k} \leftarrow t_{k-1} +  \argmin\limits_{t > 0} \quad \mathcal{I}(t - t_{k-1})=0$
\State $T\leftarrow\text{append}(t_k)$
\If{$\Sigma(k-1) == 0$}
\State $\Sigma(k) \leftarrow \delta$
\Else
\State $\Sigma(k) \leftarrow 0$
\EndIf
\State $k = k + 1$
\EndWhile
\State \textbf{return} $T,\,\Sigma$
\EndFunction
\State \textbf{end function}
\end{algorithmic}
\label{alg:two}
\end{algorithm}

The worst-case switching function \eqref{deltax} contains all of the necessary information needed to compute an upper bound on the stability margin of \eqref{uncertain}.

\section{Bounding the Stability Margin From Above and Producing a Periodic Trajectory}

We now seek to approximate the lowest upper bound on the stability margin of \eqref{uncertain} with an iterative procedure which, starting with $\delta = \underline{\delta}$, increments $\delta$ until $A_w(t;\delta,P)$ can be shown to produce a periodic trajectory. Such a trajectory with an initial condition $x_0$ and time horizon $t_f$ can become periodic, but it is not necessarily periodic from $x_0$. Therefore, the switching sequence described by $T$ and $\Sigma$ in \eqref{deltapwc} is searched to see if there exist indices $k \geq j \geq 0$ such that a discrete transition matrix $A_d$ exists with an eigenvalue with magnitude 1.

\begin{proposition}\label{prop:stopcrit}
A switching function described by \eqref{deltapwc} with sets $T$ and $\Sigma$ produces a limit cycle for \eqref{uncertain} if there exists some $k \geq j \geq 0$ such that the discrete transition matrix 
\begin{multline}\label{eq_Ad}
    A_d = \\
    e^{(A+\Sigma(k-1)A_0)(t_k - t_{k-1})} \cdots e^{(A+\Sigma(j)A_0)(t_{j+1} - t_j)}
\end{multline}
has an eigenvalue of magnitude equal to 1.
\end{proposition}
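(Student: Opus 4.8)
The plan is to show that when such indices $k \geq j \geq 0$ exist, the system possesses a nonzero trajectory that returns to a scalar multiple of its starting point after one pass through the switching segments from $t_j$ to $t_k$, and that this multiple has unit modulus, so the trajectory neither decays nor diverges. First I would observe that on each subinterval $[t_\ell, t_{\ell+1}]$ the switching function $\Delta_w(t;\delta,P)$ is constant and equal to $\Sigma(\ell)$, so the dynamics of \eqref{uncertain} are linear time-invariant with system matrix $A + \Sigma(\ell)A_0$; hence the state map across that subinterval is exactly the matrix exponential $e^{(A+\Sigma(\ell)A_0)(t_{\ell+1}-t_\ell)}$. Composing these maps from $t_j$ through $t_k$ gives precisely the discrete transition matrix $A_d$ in \eqref{eq_Ad}, so that any solution $x(t)$ of \eqref{uncertain} driven by the piecewise-constant signal \eqref{deltapwc} satisfies $x(t_k) = A_d\, x(t_j)$.

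Next I would invoke the hypothesis: $A_d$ has an eigenvalue $\lambda$ with $|\lambda| = 1$. Pick a corresponding (possibly complex) eigenvector $v \neq 0$ and set the initial condition $x(t_j) = v$ (if one wishes to stay in $\R^n$, take the real two-dimensional invariant subspace spanned by the real and imaginary parts of $v$ when $\lambda$ is non-real, and argue with the restricted $2\times 2$ rotation–scaling block; when $\lambda = \pm 1$ one may take $v$ real directly). Then $x(t_k) = \lambda v$, i.e. after the finite block of switches the trajectory has returned to a point of the same Euclidean norm as $v$ (since $|\lambda|=1$), lying on the same ray through the origin up to the phase $\lambda$. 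Concatenating this block periodically — repeating the same sequence of dwell times and switching values — produces a well-defined switching signal of the admissible form \eqref{deltapwc}, and the resulting trajectory satisfies $\|x(t_j + m\,(t_k - t_j))\| = \|x(t_j)\|$ for all integers $m \geq 0$, hence it neither converges to $0_n$ nor escapes to infinity. This is the limit cycle: the orbit is a closed curve (in the non-real eigenvalue case, traced out as a genuine periodic loop once one accounts for the rotation by $\arg\lambda$ over finitely many or, if $\arg\lambda$ is a rational multiple of $2\pi$, exactly periodic; in general it is a bounded orbit dense on an invariant curve), and in particular \eqref{uncertain} admits a bounded non-decaying solution under an admissible switching signal, witnessing marginal behavior.

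The step I expect to be the main obstacle is the precise sense in which "limit cycle" / "periodic trajectory" should be claimed when $\lambda$ is a non-real unit-modulus eigenvalue whose argument is an irrational multiple of $\pi$: then the concatenated orbit is bounded and non-decaying but not literally periodic with a finite period. The cleanest fix is to state the conclusion as the existence of a bounded, non-vanishing trajectory under an admissible switching signal (which already contradicts asymptotic stability with respect to $\delta$), and to note that in the generic case of interest — where the eigenvalue is $+1$, or where one is content with a closed invariant curve — the trajectory is exactly periodic. I would also be slightly careful that the indices $k, j$ index the switching instants produced by Algorithm~\ref{alg:two}, so the dwell times $t_{\ell+1} - t_\ell$ are the ones at which $\mathcal{I}$ crosses zero along the reference simulation; for the periodic extension one simply fixes these dwell times as the defining data of the closed-loop signal rather than re-deriving them from the indicator, which is legitimate since \eqref{deltapwc} only requires $\Delta_w$ to be piecewise constant with the stated values on the stated intervals.
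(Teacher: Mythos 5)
The paper states Proposition~\ref{prop:stopcrit} without any proof, so there is no argument of the authors' to compare against; your proposal supplies the justification that the paper leaves implicit, and it is essentially correct. The core chain --- piecewise-constant switching gives LTI dynamics on each dwell interval, the state map over $[t_j,t_k]$ is exactly the ordered product $A_d$ in \eqref{eq_Ad}, and a unit-modulus eigenvalue yields an initial condition (real, after passing to the two-dimensional real invariant subspace when $\lambda$ is non-real) whose orbit under periodic repetition of the block neither decays nor grows --- is the right and, as far as I can tell, the only natural argument here. The obstacle you flag is genuine and is in fact a defect of the proposition as literally stated: when $\lambda=e^{i\theta}$ with $\theta/2\pi$ irrational, the repeated block produces a bounded, non-vanishing, quasi-periodic trajectory confined to an invariant ellipse in the eigenplane, but not a closed orbit, so ``limit cycle'' is only obtained in the cases $\lambda=\pm 1$ or $\theta/2\pi$ rational; your proposed weakening of the conclusion to ``bounded, non-decaying trajectory under an admissible switching signal'' is the correct repair, and it is all that Algorithm~\ref{alg:three} actually needs (the paper itself hedges with ``or near periodicity''). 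Two further points you handle correctly and that are worth keeping explicit: the trajectory you construct from the eigenvector is generally \emph{not} the reference trajectory from which Algorithm~\ref{alg:two} derived the times in $T$, so \eqref{deltapwc} must be read as an open-loop signal whose dwell times are frozen data rather than re-derived from the indicator; and ``limit cycle'' is a slight misnomer throughout, since periodic orbits of a linear switched system are never isolated (every scalar multiple is again periodic), so ``periodic orbit'' is the accurate term.
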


Algorithm \ref{alg:three} seeks to find the parameter $\overline{\delta}$ which induces a limit cycle. In doing so, it calls Algorithm \ref{alg:two} to compute $\Delta_w(t;\delta,P)$ and searches the switching sequence for a discrete transition matrix $A_d$ that has an eigenvalue of magnitude 1. Such an $A_d$ certifies that $\Delta_w(t;\overline{\delta},P)$ produces a periodic trajectory.

\begin{remark}
The $\overline{\delta}$ found by Algorithm \ref{alg:three} depends on the parameter $P$ returned by Algorithm \ref{alg:one}. A higher-order Lyapunov function will cause a smaller value of $\overline{\delta}$ to be found. Therefore, $\overline{\delta}$ is an upper bound on $\hat{\delta}$, and in order to find the lowest possible $\overline{\delta}$, a high-order Lyapunov function should be used with Algorithm \ref{alg:three}, if possible.
\end{remark}

\begin{algorithm}[t]
\caption{Solve Problem 3}
\begin{algorithmic}[1]
\setlength\tabcolsep{0pt}
\Statex\begin{tabulary}{\linewidth}{@{}LLp{6cm}@{}}
\textbf{input}&:\:\:&  $A,\,A_0$ from \eqref{uncertain}, $\underline{\delta}$ from Alg. \ref{alg:one}.\\
&& $P$ from Algorithm \ref{alg:one}. \\
&& Simulation parameters: initial condition $x_0$ and time horizon $t_f$.\\
\textbf{output}&:\:\:& $\overline{\delta}$, an upper bound on $\hat{\delta}$. \\ && $T$, $\Sigma$ from \eqref{deltapwc} which produces a periodic trajectory.\\
&& $A_d$, discrete transition matrix certifying marginal stability.
\end{tabulary}
\Statex
\Function{FindPeriodicTrajectory}{inputs}
\State  \textbf{initialize:}
\hskip\algorithmicindent\par $\delta \leftarrow \underline{\delta}$
\While{No Periodic Trajectory Found}
\If{$A + \delta A_0$ not Hurwitz}
\State \% Trivial Switching Sequence Found
\State $A_d = e^{(A + \delta A_0)(t_f)}$
\State \textbf{return} $\overline{\delta} = \delta$, $A_d$, $\Sigma = \{\delta\}$, $T = \{0,t_f\}$
\EndIf
\State $T,\,\Sigma =$ \Call{FindSwitchingSequence}{inputs}
\State Search $T,\,\Sigma$ for $k \geq j \geq 0$ that satisfies \eqref{eq_Ad}.
\If{$A_d$ from \eqref{eq_Ad} found}
\State \textbf{return} $\overline{\delta}=\delta$, $A_d$, $T = T(j:k)- T(j)$, \par\hskip \algorithmicindent\hskip\algorithmicindent\hskip\algorithmicindent\hskip\algorithmicindent$\Sigma = \Sigma(j:k-1)$
\EndIf
\State $\delta \leftarrow $ increment
\EndWhile
\EndFunction
\State \textbf{end function}
\end{algorithmic}
\label{alg:three}
\end{algorithm}

\section{Numerical Examples and Applications}
The practical importance of the techniques presented in this article is highlighted in the following examples.
\begin{example}\label{ex1}[Second Order System] We first study the system \eqref{uncertain} with $n=2$ and

\begin{gather} \label{2dex}
\begin{split}
A= 
\begin{bmatrix} 
0 & 1 \\ -1 & -0.5
\end{bmatrix}, \;
A_0 =
\begin{bmatrix}
0 & 0 \\ -1 & 0
\end{bmatrix}.
\end{split}
\end{gather}

This system resembles a spring-mass damper with uncertainty in the spring constant. Algorithm \ref{alg:one} is used to compute a lower bound on the stability margin. The algorithm returns  $\underline{\delta} = 2.15$ with a 14\textsuperscript{th} order polynomial Lyapunov function. By increasing $i_{max}$ to compute a 28\textsuperscript{th} order Lyapunov function, the algorithm computes $\underline{\delta} = 2.16$. Beyond that, only slight increases in $\underline{\delta}$ can be obtained for significantly higher-order Lyapunov functions such that computation quickly becomes intractable. The next step is to find a $\overline{\delta}$ for which we can produce a periodic trajectory.

Using the 14\textsuperscript{th} order Lyapunov function, Algorithm \ref{alg:three} provides an upper bound $\overline{\delta}$ on the system's stability margin and provides the subset of the switching sequence from Algorithm \ref{alg:two} which certifies periodicity of the system (or near periodicity---the numerical method is approximate) with $A_w(t;\overline{\delta},P)$. The value $\overline{\delta}=2.21$ was computed, and two switching sequences were returned: one using simulation data from $x_0 = [1 \; 1]^T$ and another with $x_0 = [-0.2 \; 0.8]^T$, both with a time horizon $t_f=20$. The difference $\overline{\delta} - \underline{\delta} = 0.06$ illustrates how the approximation of a stability margin using Algorithm \ref{alg:one} is likely conservative.

\begin{center}
\begin{tabular}{ll}
    \multicolumn{2}{l}{Switching Sequence 1 using $x_0 = [1 \; 1]^T$} \\
    \hline
    $T_1$ & \{0, 0.943, 2.374, 
    3.317, 4.747\} \\
    $\Sigma_1$ & \{2.21, 0, 2.21, 0\}\vspace{.5em} \\
    \multicolumn{2}{l}{Switching Sequence 2 using $x_0 = [-0.2 \; 0.8]^T$} \\
    \hline
    $T_2$ & \{0, 0.251, 1.681, 2.624, 4.055\} \\
    $\Sigma_2$ & \{2.21, 0, 2.21, 0\}
\end{tabular}
\end{center}

Figure \ref{fig:Example_2d} shows two trajectories from $x_0=[-0.2\;0.8]^T$, one using $A_w(t;\overline{\delta},P_{14})$ described by $(T_1,\Sigma_1)$ and the other using $A_w(t;\overline{\delta},P_{14})$ described by $(T_2,\Sigma_2)$. Both switching sequences produce a limit cycle but at different locations. This demonstrates that a function $\Delta_w(t;\delta,P)$ is not unique but that any such function can produce a periodic trajectory for the same value of $\delta$.

\begin{figure}
		\centering
		\includegraphics[width=.5\textwidth,angle = 0]{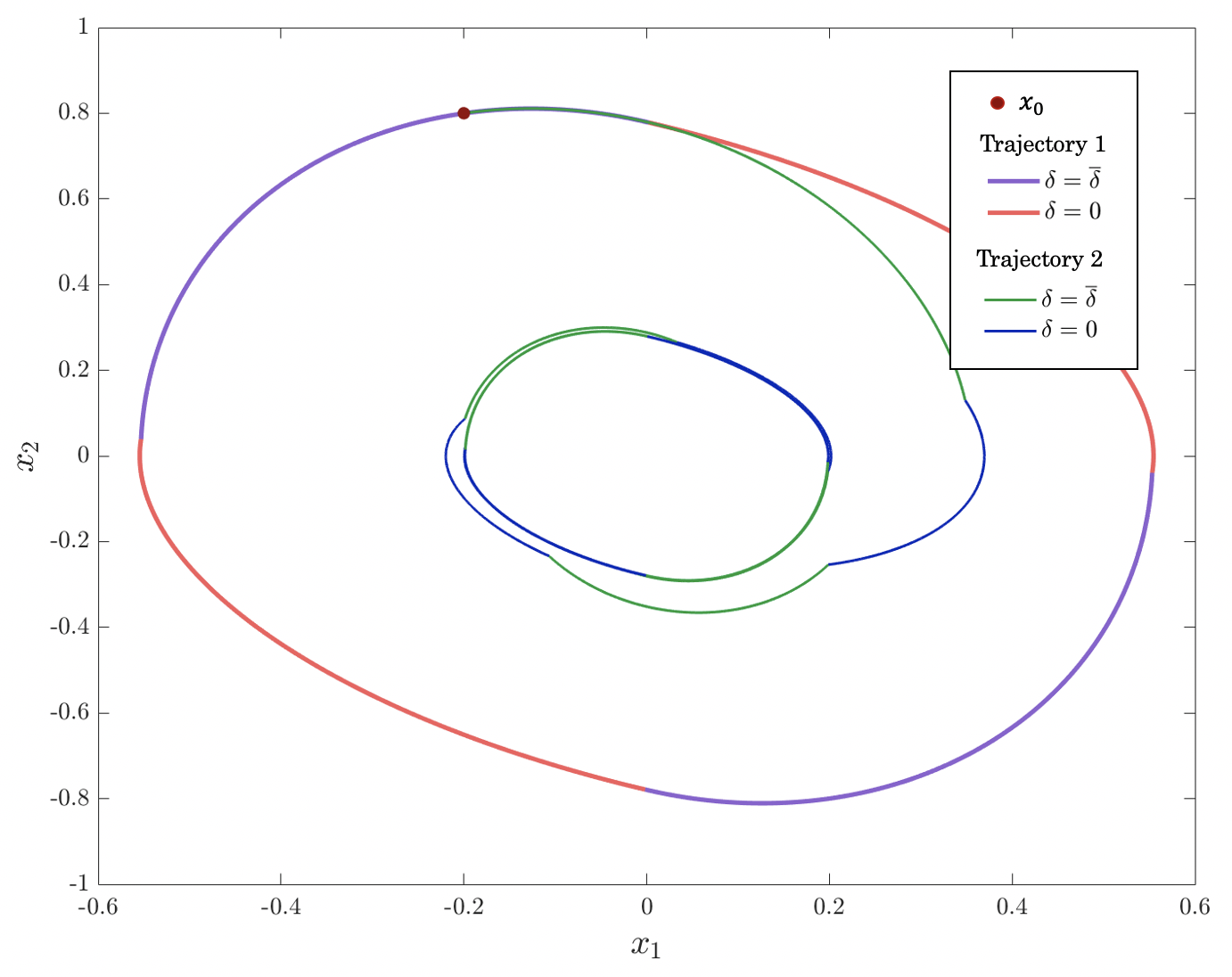}\\
		\caption{Example \ref{ex1}. Plot of two trajectories beginning at $x_0$ but with two different switching functions $\Delta_w(t;\overline{\delta}=2.21,P_{14})$. Both result in periodic trajectories.}
		\label{fig:Example_2d}
\end{figure}

\end{example}

\begin{example}\label{ex2}[Fourth Order System] For higher-order systems, limit cycles cannot be visualized on the phase plane; instead we search for periodic orbits via analysis of the indicator function \eqref{indicator} and Algorithm \ref{alg:three}. To demonstrate this, we study \eqref{uncertain} with 4\textsuperscript{th} order dynamics inspired by the linearized, non-dimensional lateral dynamics of a fixed-wing aircraft with the following parameters \cite{etkin}:

\begin{gather}
\begin{split}
A=&
\begin{bmatrix} 
-3.088 & 0 & -1425.042 & 4.5956 \\ -18.906 & -166.878 & 29.223 & 0 \\
6.762 & 4.445 & -19.389 & 0 \\ 0 & 1428.6 & 0 & 0
\end{bmatrix}, \\
A_0 =&
\begin{bmatrix}
-1 & 0 & -10 & 10 \\ -10 & -10 & 10 & 0 \\ 10 & 10 & -10 & 0 \\ 0 & 10 & 0 & 0
\end{bmatrix}.
\end{split}
\end{gather}

Algorithm \ref{alg:one} computes a maximum value of $\underline{\delta}=0.24$ for a 6th-order polynomial Lyapunov function. It may be possible to find a larger $\underline{\delta}$, but increasing $i_{max}$ to attain a higher-order Lyapunov function would make computation intractable without  additionally employing an algorithm to reduce the redundancies produced by the use of the Kronecker product.

To get a better picture of system robustness, we run Algorithm \ref{alg:three} using the 6th-order Lyapunov function, and compute $\overline{\delta} = 0.27$. The switching sequence $\Delta(t;\overline{\delta},P_{6})$ which produces a periodic trajectory is returned, described by $(T,\Sigma)$:

\begin{center}
\begin{tabular}{ll}
    \multicolumn{2}{l}{Switching Sequence using $x_0 = [1 \; 1\; 1 \; 1]^T$} \\
    \hline
    $T_1$ & \{0, 0.027, 0.060\} \\
    $\Sigma_1$ & \{0.27, 0\}\vspace{.5em} \\
\end{tabular}
\end{center}

Figure \ref{fig:indicator} plots the indicator function \eqref{indicator} for the cases where $\Delta_w(t;\overline{\delta},P_6)$ and $\Delta_w(t;\underline{\delta},P_6)$ are used. The former generates a limit cycle starting from about $t=0.332$, while the latter produces values of $\mathcal{I}(t)$ which slowly decay over time. The initial erratic jumps in $\mathcal{I}(t)$ show that the trajectory using $\overline{\delta}$ does not begin on a limit cycle but rather takes a few tenths of a second to reach the periodic orbit. \qedhere

\begin{figure}
		\centering
		\includegraphics[width=.5\textwidth,angle = 0]{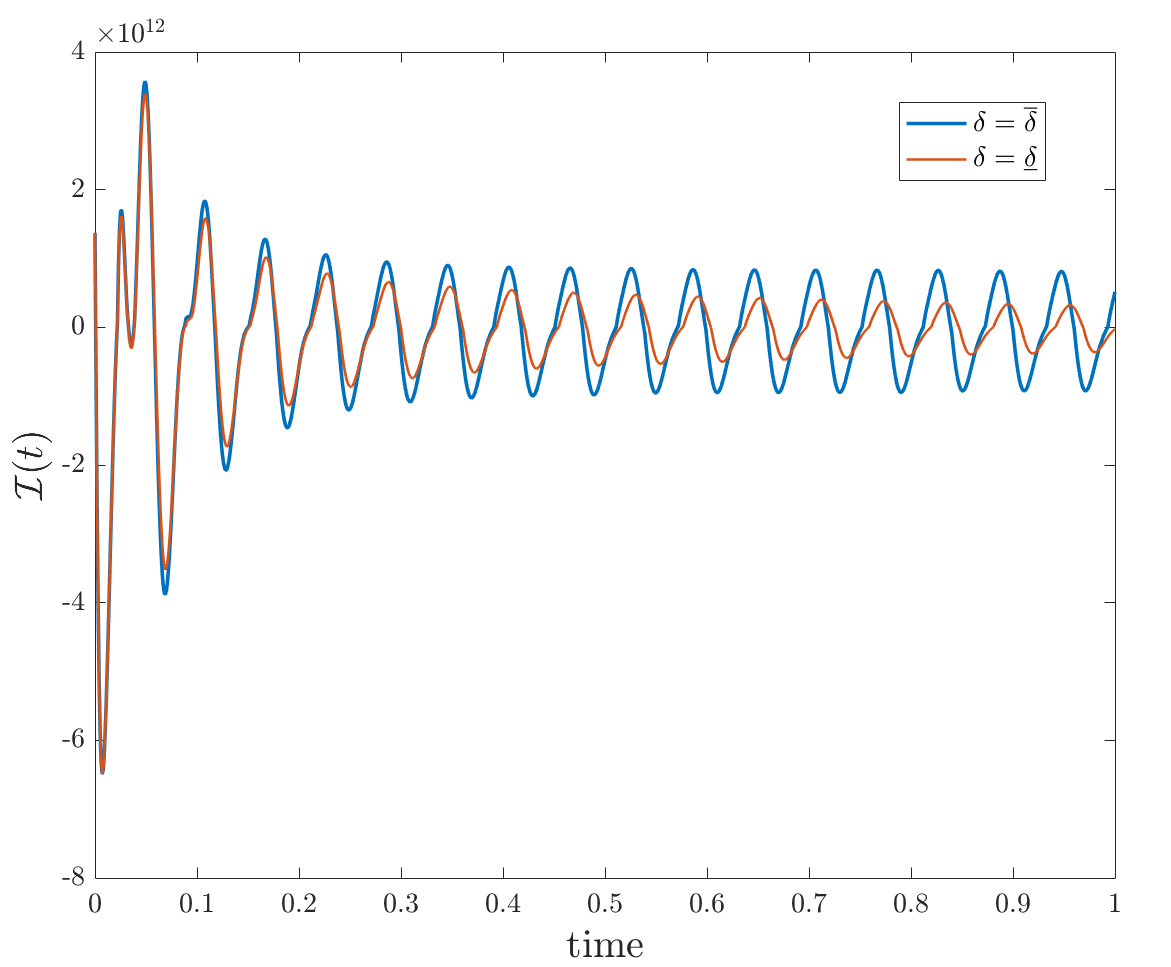}\\
		\caption{Example \ref{ex2}. Plot of $\mathcal{I}(t)$ produced with $\Delta_w(t;\delta,P_6)$ for $\delta=\overline{\delta}$ and $\delta=\underline{\delta}$.}
		\label{fig:indicator}
\end{figure}

\end{example}

Of practical interest to an analysis of uncertain systems are worst-case impulse responses. In \cite{abate2020performance}, the authors use the system hierarchy \eqref{hierarchy} to easily compute exponential bounds on an impulse response. The methods in this paper can be used to find a worst-case impulse response which approaches these bounds. We add a control input and define an output for the system \eqref{uncertain} and study
\begin{align} \label{ex_system}
\begin{split}
    \dot{x} &= A(t)x +Bu\\
    y &= Cx \\
    A(t) &= A + \Delta(t) A_0
\end{split}
\end{align}

The impulse response $h(t)$ of \eqref{ex_system} is described parametrically by a solution $\varphi (t)$ to 

\begin{equation}\label{impulse_sys}
\begin{split}
    \dot{\varphi}(t) & = A(t)\varphi(t) \\
    h(t) &= C\varphi(t) \\
    \varphi(0) &= B
\end{split}
\end{equation}

Since the impulse response of \eqref{ex_system} is closely related to the unforced response, the algorithms presented above can be used to compute various worst-case and marginally stable trajectories in these scenarios as well. 

\begin{example}[Worst-Case Impulse Response] \label{ex3}
Consider the system studied in Example \ref{ex1} with the additional parameters $B=[0 \; 1]^T$ and $C=[1 \; 0]$. We use Theorem 4 in \cite{abate2020performance}, with parameter $\alpha=0.11$, to produce an exponential bound on the impulse response. Using Algorithm \ref{alg:two}, a worst-case switching function $\Delta_w(t;\delta=1,P_{14})$ is computed for \eqref{impulse_sys}. Figure \ref{fig:impulse} shows how this switching function produces an impulse response which approaches the bound much more closely than the system with no switching.

\begin{figure}
		\centering
		\includegraphics[width=.5\textwidth,angle = 0]{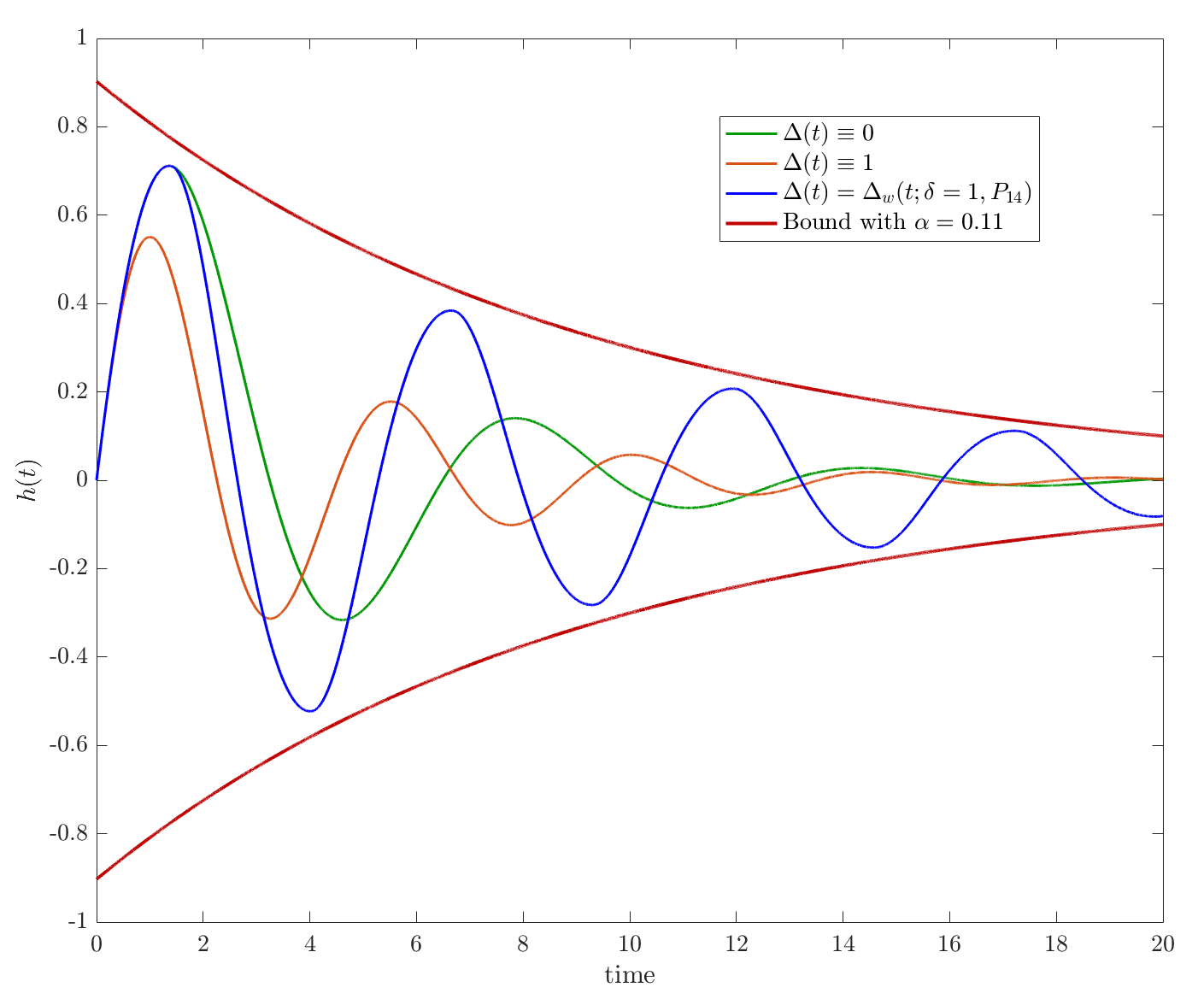}\\
		\caption{Example \ref{ex3}. Impulse response plots showing how using the worst-case switching function can cause an impulse response to approach an exponential bound.}
		\label{fig:impulse}
\end{figure}
\end{example}

\section{Conclusion}
In this work, we show how the more general class of homogeneous polynomial Lyapunov functions is used to compute stability margins with reduced conservatism, and we show how these Lyapunov functions aid in the search for periodic trajectories for marginally stable LTV systems.  Our work is premised on the recent observation that the search for a homogeneous polynomial Lyapunov function for some LTV systems is easily encoded as the search for a quadratic Lyapunov function for a related LTV system, and our main contribution is an intuitive algorithm for generating upper and lower bounds on the system's stability margin. We show also how the worst-case switching scheme---which draws an LTV system closest to a periodic orbit---is generated.


\bibliography{Bibliography}
\bibliographystyle{ieeetr}
\end{document}